
\documentclass{wscpaperproc}
\usepackage{latexsym}
\usepackage{graphicx}
\usepackage{mathptmx}
\usepackage[T1]{fontenc}
\usepackage{xspace}
\usepackage{wrapfig}

%
\usepackage{amsmath}
\usepackage{amsfonts}
\usepackage{amssymb}
\usepackage{amsbsy}
\usepackage{amsthm}
\usepackage{cite}
\usepackage{soul}
\usepackage{amsmath,amssymb,amsfonts}
\usepackage{algorithmic}
\usepackage[linesnumbered,ruled,vlined]{algorithm2e}
\usepackage{graphicx}
\usepackage{textcomp}
\usepackage{xcolor}
\usepackage{newtxmath}

%


\usepackage[pdftex,colorlinks=true,urlcolor=blue,citecolor=black,anchorcolor=black,linkcolor=black]{hyperref}



%

\newtheoremstyle{wsc}
{3pt}
{3pt}
{}
{}
{\bf}
{}
{.5em}
{}

\theoremstyle{wsc}
\newtheorem{theorem}{Theorem}

\newtheorem{definition}{Definition}

\begin{document}

%
%

\pagestyle{fancyplain}

\thispagestyle{plain}
\firstPageHead{}

\chead{\fancyplain{}{\itshape Chen, Banerjee, George, Busart}}

\rhead{}
\cfoot{}
\renewcommand{\headrulewidth}{0pt} 

\makeatletter
\let\@internalcite\cite
\def\cite{\def\@citeseppen{-1000}%
    \def\@cite##1##2{(##1\if@tempswa , ##2\fi)}%
    \def\citeauthoryear##1##2##3{##1 ##3}\@internalcite}
\def\citeNP{\def\@citeseppen{-1000}%
    \def\@cite##1##2{##1\if@tempswa , ##2\fi}%
    \def\citeauthoryear##1##2##3{##1 ##3}\@internalcite}
\def\citeN{\def\@citeseppen{-1000}%
    \def\@cite##1##2{##1\if@tempswa, ##2)\else{}\fi}%
    \def\citeauthoryear##1##2##3{##1 (##3)}\@citedata}
\def\citeA{\def\@citeseppen{-1000}%
    \def\@cite##1##2{(##1\if@tempswa , ##2\fi)}%
    \def\citeauthoryear##1##2##3{##1}\@internalcite}
\def\citeANP{\def\@citeseppen{-1000}%
    \def\@cite##1##2{##1\if@tempswa , ##2\fi}%
    \def\citeauthoryear##1##2##3{##1}\@internalcite}
\def\shortcite{\def\@citeseppen{-1000}%
    \def\@cite##1##2{(##1\if@tempswa , ##2\fi)}%
    \def\citeauthoryear##1##2##3{##2 ##3}\@internalcite}
\def\shortciteNP{\def\@citeseppen{-1000}%
    \def\@cite##1##2{##1\if@tempswa , ##2\fi}%
    \def\citeauthoryear##1##2##3{##2 ##3}\@internalcite}
\def\shortciteN{\def\@citeseppen{-1000}%
    \def\@cite##1##2{##1\if@tempswa, ##2\else{}\fi}%
    \def\citeauthoryear##1##2##3{##2 (##3)}\@citedata}
\def\shortciteA{\def\@citeseppen{-1000}%
    \def\@cite##1##2{(##1\if@tempswa , ##2\fi)}%
    \def\citeauthoryear##1##2##3{##2}\@internalcite}
\def\shortciteANP{\def\@citeseppen{-1000}%
    \def\@cite##1##2{##1\if@tempswa , ##2\fi}%
    \def\citeauthoryear##1##2##3{##2}\@internalcite}
\def\citeyear{\def\@citeseppen{-1000}%
    \def\@cite##1##2{(##1\if@tempswa , ##2\fi)}%
    \def\citeauthoryear##1##2##3{##3}\@citedata}
\def\citeyearNP{\def\@citeseppen{-1000}%
    \def\@cite##1##2{##1\if@tempswa , ##2\fi}%
    \def\citeauthoryear##1##2##3{##3}\@citedata}
%
%
%
\def\@citedata{%
    \@ifnextchar [{\@tempswatrue\@citedatax}%
                  {\@tempswafalse\@citedatax[]}%
}

\def\@citedatax[#1]#2{%
\if@filesw\immediate\write\@auxout{\string\citation{#2}}\fi%
  \def\@citea{}\@cite{\@for\@citeb:=#2\do%
    {\@citea\def\@citea{, }\@ifundefined
       {b@\@citeb}{{\bf ?}%
       \@warning{Citation `\@citeb' on page \thepage \space undefined}}%
{\csname b@\@citeb\endcsname}}}{#1}}%

%
\def\@citex[#1]#2{%
\if@filesw\immediate\write\@auxout{\string\citation{#2}}\fi%
  \def\@citea{}\@cite{\@for\@citeb:=#2\do%
    {\@citea\def\@citea{; }\@ifundefined
       {b@\@citeb}{{\bf ?}%
       \@warning{Citation `\@citeb' on page \thepage \space undefined}}%
{\csname b@\@citeb\endcsname}}}{#1}}%

%
\def\@biblabel#1{}
\makeatother



\newdimen\bibindent
\bibindent=0.0em
\def\thebibliography#1{\section*{\refname}\list
   {}{\settowidth\labelwidth{[#1]}
   \leftmargin\parindent
   \itemindent -\parindent
   \listparindent \itemindent
   \itemsep 0pt
   \parsep 0pt}
   \def\newblock{}
   \sloppy
   \sfcode`\.=1000\relax}


\setlength{\baselineskip}{12.7pt}

\title{Reinforcement Learning with an Abrupt Model
Change}

\author{Wuxia Chen\\[12pt]
	Department of Industrial Engineering\\
	University of Pittsburgh\\
\and
Taposh Banerjee\\[12pt]
Department of Industrial Engineering\\
University of Pittsburgh\\
\and
Jemin George\\ [12pt]
DEVCOM Army Research Laboratory
\and
Carl Busart\\ [12pt]
DEVCOM Army Research Laboratory
}

\maketitle

\section*{ABSTRACT}
The problem of reinforcement learning is considered where the environment or the model undergoes a change. An algorithm is proposed that an agent can apply in such a problem to achieve the optimal long-time discounted reward. The algorithm is model-free and learns the optimal policy by interacting with the environment. It is shown that the proposed algorithm has strong optimality properties. The effectiveness of the algorithm is also demonstrated using simulation results. The proposed algorithm exploits a fundamental reward-detection trade-off present in these problems and uses a quickest change detection algorithm to detect the model change. 
Recommendations are provided for faster detection of model changes and for smart initialization strategies. 


\section{INTRODUCTION}
\label{sec:intro}

We study the problem of reinforcement learning (RL) with model changes in this paper. In an RL problem, an agent interacts with an environment by taking a sequence of actions to learn the optimal way to interact and optimize a long-term reward criterion \cite{Sutton:2018,Bertsekas:2012,Bertsekas:1996,Meyn:2022}. In many applications, the statistical or physical properties of the environment may change over time. It then becomes necessary for the agent to adapt its strategy to the changes.  For example, in an inventory control problem, the decision-maker has to consider the time-varying distribution of the demands to achieve the maximum possible profit. In an autonomous driving system, an autonomous car has to derive the driving policy considering the position and velocity of other vehicles \cite{Guan:2018} and also adapt to changing weather conditions. In a recommendation system, the agent must adapt its recommendations based on changing user preferences. In the framework of Markov decision processes (MDP), a change in the environment may correspond to a change in the transition probabilities of the Markov process being controlled or a change in the reward process. 

The problem of RL in a nonstationary environment has been extensively studied in the literature. We refer the readers to \cite{Banerjee:2017} and the references therein to review the literature. Some more recent references are discussed below. In the MDP context, if the transition probabilities of the model are known and the distribution of the change points (the times at which the model changes) are also known, then the problem can be reformulated in a Partially Observable MDP (POMDP) framework where a hidden state can be used to represent the true model. However, such model information is rarely known in practice. 

In this paper, we provide a model-free solution to this problem and demonstrate its performance through examples. The proposed solution is based on strong theoretical arguments. Specifically, our contributions are as follows:
\begin{enumerate}
    \item We first argue that under reasonable assumptions it is $\epsilon$-optimal to execute the optimal policy for the learned model, use a quickest change detection (QCD) algorithm \cite{Veeravalli:2014} to detect the model changes and switch to learning a new model after a model change is detected. The $\epsilon$-optimality is established by comparing the performance with that of an oracle that knows the location of change points, see Section \ref{problem_formulation}. 
    \item Next, we show that the policy that is optimal for optimizing rewards may not be optimal to detect the model change. Thus, there exists a trade-off between detection and immediate reward optimization that can be exploited to optimize the overall reward. Our proposed algorithm exploits this trade-off, see Section~\ref{sec:QCD}. 
    \item In the above context, we show that in problems like inventory control, there exists a universal policy that helps detect the model change fastest. The universal policy there corresponds to the policy that keeps the inventory full at all times, see Section~\ref{sec:univQCD}. 
    \item We also show that we can use the structural results from the MDP literature to initialize the system after a change is detected, see Section \ref{inv}. We demonstrate through simulation results that this leads to faster convergence and better overall reward, see Section~\ref{sec:smartINIT}. 
\end{enumerate}
The existence of the reward-detection trade-off was first reported in \cite{Banerjee:2017}, where a model-based solution is provided. We show in this paper that the benefit of this trade-off can be exploited even in the model-free setting by carefully designing the algorithm. In addition, in this paper, we show the existence of universal change detection policies and also discuss smart initialization strategies.

We note that it is sometimes possible for the agent to detect the model or environment change using an external sensor. For example, a change in driving conditions (e.g., weather, friction, or traffic conditions) for an autonomous cars can often be detected using external sensors. The more challenging problem is when the change in the model can only be observed through the state of the system. For example, a change in user preference or demand may not always be detected using an external sensor. In this paper, we focus on the latter problem.

While the problem has been extensively researched in the literature, its natural analytical complexity has made it challenging to solve directly. A lot of  previous works have focused on developing approximate solutions. For instance,  \cite{Hadoux:2014} and \cite{dayan1996exploration} have reformulated the problem as a Partially Observable MDP (POMDP) and utilized approximate POMDP to solve the problem.  In \cite{da2006dealing} and \cite{doya2002multiple}, they keep the estimates of the current MDP parameters and use the next state or reward to evaluate whether the parameters of the current MDP have changed. Other approaches, such as hidden mode MDPs and mixed observable MDPs (MOMDP),  have been employed  to effectively capture the transition between distinct MDPs, and obtain an  approximation  solution, see \cite{chades2012momdps} and \cite{choi2001hidden}. The changes in the properties of MDPs, such as transition kernels or rewards, will lead  to alterations in the state-action sequence. In \cite{Allamaraju:2014} and \cite{Hadoux:2014}, sequential detection methods were employed where the optimal policy for each MDP was executed, and a change detection algorithm was utilized to detect model changes, but they have not paid attention to the detection-reward trade-off. Other papers where a QCD approach is considered are \cite{dahlin2022controlling,chen2022change}. A formal and extensive comparison with other proposed solutions is part of our future work. We see our method as another candidate for an off-the-shelf algorithm with theoretical guarantees that a user can try in their RL problem.

\section{Problem Formulation and $\epsilon$-Optimal Policies}\label{problem_formulation}
Suppose we have a family of Markov Decision Processes $\mathbf{\{M_\theta\}}$, where $\theta$ takes value in some index set $\Theta$. For each $\theta$, one MDP $\mathbf{M_\theta} = (S,A,T_\theta, R_\theta )$ is defined by a tuple with four components: state space $S$, action space $A$, transition kernel $T_\theta$, and reward function $R_\theta$ \cite{Banerjee:2017}. We observe a sequence of states $\{ S_t\}$, and for each observed state $ S_t$, we make a decision $A_t$. For each state-action pair $(S_t, A_t)$, the next state $S_{t+1}$ is acquired according to the transition kernel $T_\theta$, where  \\ 
\begin{equation}\label{eq1}
T_\theta(s,a, s')= \mathbf{P}(S_{t+1}=s'|A_t=a, S_t=s). 
\end{equation}
The reward $R_\theta (S_t, A_t, S_{t+1})$ is acquired after observing the next state $S_{t+1}$. When the context is clear, we simply refer to the reward at time $t$ by $R_t$. 

In a non-stationary environment, the transition kernel and reward structure change over time. For simplicity and ease of exposition, in this paper we restrict our attention to only one change point. At some time $\gamma$, the MDP parameter changes from $\theta=\theta_0$ to $\theta=\theta_1$: 
\begin{equation}\label{eq2}
\mathbf{P}(S_{t+1}=s'|A_t=a, S_t=s) = \\ 
\begin{cases}
T_{\theta_0}(s,a, s'), \quad  t< \gamma, \quad \text{Model }\mathbf{M_0}\\
T_{\theta_1}(s,a, s'), \quad  t\geq \gamma, \quad \text{Model }\mathbf{M_1}.
\end{cases}
\end{equation}
A policy is defined as the potentially infinite-length vector of Markov maps:
$$ 
\Pi = [\mu_0, \mu_1, \dots],
$$ 
where each $\mu_t$ is a map from state $S_t$ to action $A_t$. 

If the model is stationary and the parameter $\theta$ remains the same, then one of the classical ways to solve the MDP problem is to seek a policy to maximize the long term discounted reward:
\begin{equation}\label{eq:stationary}
	J_\theta^*(s_0) = \max_\Pi \;  \mathbf{E}_\theta\left[\sum_{t=0}^\infty \beta^t R_{t} \; \Big| \; S_0=s_0\right].
\end{equation}
where $\beta \in (0,1)$ is a discount factor and the expectation is with respect to the true $\theta$. We will use $\Pi_\theta^*=[\mu_\theta^*, 
\mu_\theta^*, \dots]$ to denote the optimal stationary policy for this problem when the true model parameter is $\theta$. 

In a non-stationary environment where the MDP changes from $\mathbf{M_0}$ to  $\mathbf{M_1}$ ($\theta_0$ to $\theta_1$) at change point $\gamma$ ($\gamma$ is unknown to the agent), we modify the discounted cost problem as
\begin{equation}\label{eq:nonstationary}
	J_{\theta_0, \theta_1}^*(s_0) = \max_\Pi \;  \mathbf{E}_{\theta_0, \theta_1}\left[\sum_{t=0}^{\gamma-1}\beta^t R_{t} + \sum_{t=\gamma}^{\infty} \beta^ {t-\gamma}R_{t} \; \Big| \; S_0=s_0\right].
\end{equation}
This way of resetting the discounting gives equal weights to the performance of a policy before and after the change. 

We now define the concept of an oracle:
\begin{definition}[Oracle Policy]
A policy is called an oracle policy if it has knowledge of the change point $\gamma$, and executes the policy $\Pi_{\theta_0}^*$ before change and the policy $\Pi_{\theta_1}^*$ after the change. 
\end{definition}
It is clear that if the change point $\gamma$ is large enough, the discounted reward for an oracle policy is approximately equal to the value $J_{\theta_0, \theta_1}^*(s_0)$ for any initial state $s_0$. We now show that using a quickest change detection algorithm \cite{Veeravalli:2014} one can achieve the performance of an oracle under modest assumptions on the problem. 

\begin{definition}[Quickest Change Detection (QCD) Algorithm and QCD-based policy]
    By a QCD algorithm we mean a stopping time $\tau$ whose value is decided based on the sequence of states $S_0, S_1, \dots$, actions $A_0, A_1, \dots$, and rewards $R_0, R_1, \dots$. At time $\tau$ we declare that a change in model has occurred. A policy that employs a QCD stopping rule to detect change is called a QCD-based policy. 
\end{definition}

We define the information number \cite{Banerjee:2017,lai1998information}
\begin{equation}
\label{eq:informationnumber}
    I_{\theta_0, \theta_1} = \lim_{n \to \infty} \; \frac{1}{n} \sum_{k=1}^n \frac{T_{\theta_1}(S_k, A_k, S_{k+1})}{T_{\theta_1}(S_k, A_k, S_{k+1})}, \quad \text{when } \gamma =1.
\end{equation}

\begin{theorem}[$\epsilon$-optimality of QCD-based policy]
\label{thm:epsilonopt}
When the transitions functions before and after change are known, then to achieve $\epsilon$ optimality in the problem in \eqref{eq:nonstationary}, it is enough to restrict our search to QCD-based policies. 
\end{theorem}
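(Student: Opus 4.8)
The plan is to prove the statement by exhibiting one QCD-based policy whose value is within $\epsilon/2$ of that of the oracle policy of Definition~1; since (as already noted above) the oracle's value is within $\epsilon/2$ of $J^*_{\theta_0,\theta_1}(s_0)$ once $\gamma$ is large, the supremum of the value over QCD-based policies is then within $\epsilon$ of $J^*_{\theta_0,\theta_1}(s_0)$, which is exactly the assertion. Because $T_{\theta_0}$ and $T_{\theta_1}$ are known, the log-likelihood-ratio increments $Z_k=\log\frac{T_{\theta_1}(S_k,A_k,S_{k+1})}{T_{\theta_0}(S_k,A_k,S_{k+1})}$ can be computed from the observed trajectory, so we may run a CUSUM procedure (or any first-order asymptotically optimal QCD rule) on $\{Z_k\}$ with threshold $h$ to obtain a stopping time $\tau_h$. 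The candidate policy $\Pi^{\tau_h}$ plays the stationary optimal policy $\mu^*_{\theta_0}$ while $t<\tau_h$ and switches permanently to $\mu^*_{\theta_1}$ at $t=\tau_h$; it is QCD-based by construction.

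Next I would couple $\Pi^{\tau_h}$ with the oracle on a common sample path and reduce the value gap to a discounted sum of one-step optimality deficits, incurred only on those steps where the two policies choose different actions. There are just two such blocks: the steps $[\gamma,\tau_h)$ on the event $\{\tau_h>\gamma\}$ (detection delay --- $\Pi^{\tau_h}$ still running $\mu^*_{\theta_0}$ under $\mathbf{M_1}$) and the steps $[\tau_h,\gamma)$ on the event $\{\tau_h<\gamma\}$ (false alarm --- $\Pi^{\tau_h}$ running $\mu^*_{\theta_1}$ under $\mathbf{M_0}$); moreover, a false alarm makes the two policies enter the post-change regime from different states. Applying the performance-difference lemma with the value functions $J^*_{\theta_1}$ and $J^*_{\theta_0}$, each per-step deficit is at most the range of the relevant optimal value function, hence at most $R_{\max}/(1-\beta)$ when rewards lie in $[0,R_{\max}]$, and one obtains
\[
J^{\text{oracle}}(s_0)-J^{\Pi^{\tau_h}}(s_0)\le\frac{R_{\max}}{1-\beta}\Bigl(\mathbf{E}\bigl[\textstyle\sum_{j=0}^{(\tau_h-\gamma)^+-1}\beta^{j}\bigr]+\mathbf{E}\bigl[\textstyle\sum_{t=\tau_h}^{\gamma-1}\beta^{t}\bigr]+\mathbf{P}(\tau_h<\gamma)\Bigr).
\]

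It then remains to drive the right-hand side below $\epsilon/2$ by a good choice of $h$. The first term is the discounted detection delay; it is small once $\mathbf{E}_\gamma[(\tau_h-\gamma)^+]$ is small compared with $1/(1-\beta)$, and by the asymptotic optimality of the detection rule this mean delay is of order $h/I_{\theta_0,\theta_1}$, where $I_{\theta_0,\theta_1}$ is the Kullback--Leibler rate of \eqref{eq:informationnumber} evaluated along the chain driven by $\mathbf{M_1}$ under $\mu^*_{\theta_0}$. The other two terms are false-alarm contributions and shrink as the mean time to false alarm $\mathbf{E}_\infty[\tau_h]$ grows, that is, as $h$ grows. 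These two demands pull $h$ in opposite directions --- this is precisely the reward--detection trade-off stressed in the introduction --- but there is a nonempty window of admissible thresholds provided $I_{\theta_0,\theta_1}$ is bounded away from $0$ and not too small relative to $\log\bigl(R_{\max}/(1-\beta)\bigr)$, and any $h$ in that window makes the bound hold. Existence and positivity of $I_{\theta_0,\theta_1}$, together with the concentration of the delay around $h/I_{\theta_0,\theta_1}$, are exactly where the ``modest assumptions'' --- in essence, ergodicity of the relevant controlled Markov chains --- enter.

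The step I expect to be the main obstacle is controlling the term $\mathbf{P}(\tau_h<\gamma)$ in the display. Because the objective \eqref{eq:nonstationary} resets the discount at $\gamma$, the state in which the post-change phase begins is felt at full weight $1$, and $\mathbf{P}(\tau_h<\gamma)\to1$ if $\gamma$ is allowed to grow with $h$ held fixed, so this term is not automatically negligible for large $\gamma$. Handling it cleanly appears to need an additional structural hypothesis --- a bound on the range of $J^*_{\theta_1}$ (so that entering the post-change phase in a ``wrong'' state is cheap), or tying $h$ to a known lower bound on $\gamma$, or a ``detectability within the effective horizon $1/(1-\beta)$'' condition --- again of the modest type the statement invokes. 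A secondary technical point is that $\{Z_k\}$ is not i.i.d.\ but is driven by the controlled chain, so the CUSUM delay and false-alarm bounds must be used in their Markovian forms, which is why the ergodicity assumptions are indispensable.
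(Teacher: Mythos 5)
Your proposal follows essentially the same route as the paper's (sketched) proof: construct the QCD-based policy that runs $\Pi^*_{\theta_0}$, detects with a CUSUM-type rule, and switches to $\Pi^*_{\theta_1}$, then bound its gap to the oracle by a bounded-reward-times-expected-delay term and invoke the inverse dependence of the delay on $I_{\theta_0,\theta_1}$. The paper's version is coarser --- it absorbs the entire false-alarm contribution into an unexplained additive constant $\delta$ --- whereas you correctly isolate $\mathbf{P}(\tau_h<\gamma)$ as the delicate term; that is a refinement of, not a departure from, the paper's argument.
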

\begin{proof} We only provide a sketch of the proof. We construct a policy and show that under certain conditions it is $\epsilon$-optimal for the problem in \eqref{eq:nonstationary}. Consider the policy that initially employs $\Pi_{\theta_0}^*$ before change, detects the change using a QCD stopping rule, and then switches to the policy $\Pi_{\theta_1}^*$ after the change is detected. If the rewards are bounded by $M$ and the optimal QCD algorithm is used, then the performance of this algorithm will be within $\mathbf{E}(\tau-\gamma)M + \delta$ of the oracle. Here $\delta$ is a small positive constant. It is well-known that the detection delay of the optimal algorithm is inversely proportional to the information number $I_{\theta_0, \theta_1}$. Thus, if this number if large enough, the term  $\mathbf{E}(\tau-\gamma)M$ will be small enough. Since the oracle policy is $\epsilon$-optimal, so is the proposed policy. 
\end{proof}

In practice, we do not know the models and hence cannot directly use the optimal policy and also cannot employ the optimal QCD algorithm. However, we can use algorithms like Q-learning to learn the optimal policy and use nonparametric methods in QCD to achieve the oracle performance. In the rest of the paper, we make the assumption that the change point $\gamma$ is large enough so that the Q-learning algorithm has a reasonable amount of time to converge to the optimal policy. In other words, we assume that the stationarity is slowly changing. 



\section{Q-learning algorithm with decreasing epsilon greedy action selection}\label{Q-learning}
We use a modified version of the classical Q-learning algorithm \cite{Watkins:1992,Bertsekas:1996} to learn the optimal policy for each model, before and after the change. 
We provide a brief overview of the modified Q-learning algorithm here. 

It is well-known that the optimal reward function $J_\theta^*$ satisfies the Bellman equation. 

\begin{equation}\label{eq:bellman}
	J_\theta^*(s) = \max_{a} \sum_{s'} T_\theta(s,a,s') \Big[R_\theta(s, a, s') + \beta J_\theta^*(s')\Big].
\end{equation}
The $Q$-function is defined as 
\begin{equation}\label{eq:Qfunc}
    Q_\theta^*(s,a) = \sum_{s'} T_\theta(s,a,s') \Big[R_\theta(s, a, s') + \beta J_\theta^*(s')\Big].
\end{equation}
With the definition, the $Q$-function also satisfies a fixed-point equation given by
\begin{equation}\label{eq:Qfixedpt}
    Q_\theta^*(s,a) = \sum_{s'} T_\theta(s,a,s') \Big[R_\theta(s, a, s') + \beta \max_{a'} Q_\theta^*(s', a')\Big].
\end{equation}
The problem of Q-learning is to estimate, for any fixed $\theta$, the optimal Q-function $Q_\theta^*(s,a)$ without knowing the transition function $T_\theta(s,a,s')$. In the $Q$-learning algorithm this estimation is done using a stochastic approximation algorithm \cite{borkar2009stochastic,harold1997stochastic}. 

For a sequence of states and actions $S_0, A_0, S_1, A_1, S_2, A_2, \dots$, the Q-learning algorithm estimates the $Q$-function for each state-action pair using the updates 
\begin{equation}
TD \leftarrow R_t+ \beta \max_{ a } Q(S_{t+1},a )-   Q(S_{t},A_t )  
\end{equation}
\begin{equation}
Q(S_{t},A_t ) \leftarrow Q(S_{t},A_t ) + \alpha TD,
\end{equation}
where  $Q(S_{t},A_t )  $ is the current Q value of state-action pair $(s,a)$, $ \max_{ a } Q(S_{t+1},a ) $ is the estimation of optimal future value, $R_t$ is the received reward when taking action $A_t$ at state $S_t$,  $\beta$ is the discount factor $(0\leq \beta \leq 1)$, and $\alpha$ is the learning rate $(0< \alpha \leq 1)$. For the $Q$-learning to converge to the optimal $Q^*$, we must visit each state-action pair infinitely often \cite{Bertsekas:1996}. This is achieved by using an $\epsilon$-greedy strategy where a random action is chosen with probability $\epsilon$ and the optimal action (based on current estimates of $Q$) is chosen with probability $1-\epsilon$. To ensure faster convergence in a nonstationary environment, we use a variant of $Q$-learning in which the learning rate and the exploration rate are reduced over time. The entire algorithm is given in Algorithm 1.

\begin{algorithm}
    \SetKwInOut{KwIn}{Data}
    \SetKwInOut{KwOut}{Result}
    \KwIn{initialized Q-table, initial learning rate $\alpha_0$, initial exploration rate $\epsilon$, discount factor $\beta$, cut-off greedy probability $\epsilon_c$, cut-off learning rate $\alpha_c$.}
    \KwOut{a trained Q-table that gives optimal (or near-optimal) policy }
    initialization: set $S_0 = 0$, learning rate $\alpha=\alpha_0$, $\epsilon= \epsilon_0 $   \\
    
    \For{k =  1 : time-horizon}{
        $c \leftarrow{random(0,1)} $  \\
        \eIf{$c < \epsilon$}{
            $a \leftarrow random.choice(\text{action space A})$
         }{
            $a \leftarrow \arg \max_{a}(Q(s,\cdot))$
         }    
        $ TD \leftarrow R_k+ \beta \max_{ a' } Q(s',a' )-   Q(s,a ) $ \\ 
        $ Q(s,a) \leftarrow Q(s,a ) + \alpha TD $ \\
        $ s \leftarrow{s'} $ \\
        \If{$\epsilon > \epsilon_{c}$} {
             $ \epsilon \leftarrow{ \epsilon- \Delta } $ \quad \quad \quad \quad \tcc{   decrease exploration rate} 
             }   
        \If{$\alpha > \alpha_{c}$} {
             $ \alpha \leftarrow{ \alpha- \Delta } $ \quad \quad \quad \quad \tcc{decrease learning rate } 
             } 
    }
    \KwRet{a new Q-table whose $ \arg \max_{a}(Q(s, \cdot) $  gives an optimal or near optimal policy}
    \caption{Q-learning algorithm with decreasing epsilon greedy action selection}
\end{algorithm}

\section{Quickest Change Detection Algorithms and Effect of Policy}
\label{sec:QCD}
To detect a model change using the state and reward processes, we use a quickest change detection algorithm \cite{Banerjee:2017,Veeravalli:2014,tartakovsky2014sequential}.
If the transition kernels $T_{\theta_0}(s,a,s')$ and $T_{\theta_1}(s,a,s') $ are known and it is also known that the model will change from parameter $\theta_0$ to $\theta_1$, then we can use the generalized cumulative sum (CUSUM) algorithm from \cite{lai1998information}. In this algorithm, we compute a sequence of statistics
$$
W_n = \max_{1 \leq k \leq n} \sum_{i=k}^n \frac{T_{\theta_1}(S_{i-1},A_{i-1},S_{i}) }{T_{\theta_0}(S_{i-1},A_{i-1},S_{i}) },
$$
and stop the first time this statistic is above a pre-defined threshold:
$$
\tau_c = \min\{n \geq 1: W_n > A\}. 
$$
If $\gamma$ is a constraint on the mean time to a false alarm, then it has been shown in \cite{lai1998information} that under mild conditions, the delay of the generalized CUSUM algorithm is given by
\begin{equation}
    \label{eq:CUSUMperf}
    \mathbf{E}_1[\tau_c] = \frac{\log \gamma}{I_{\theta_0, \theta_1}}, \quad \textbf{as } \gamma \to \infty,
\end{equation}
where $I_{\theta_0, \theta_1}$ is defined in \eqref{eq:informationnumber}. 
Note that the delay is inversely proportional to the information number $I_{\theta_0, \theta_1}$. This number depends on the policy through the choice of action sequence $\{A_t\}$. Thus, different policies will lead to different values of $I_{\theta_0, \theta_1}$ and hence different detection delays. In general, this characteristic and dependence on information number is shown by almost all popular QCD algorithms. 

If the state and action spaces are finite, there must exist an optimal policy for quickest change detection. We note that the best policy may depend on the algorithm used for QCD. 
\begin{definition}[Best QCD Policy]
    A policy is called the best QCD policy if when applied to the system leads to the fastest detection of a model change. 
\end{definition}

Since we do not have access to the transition kernels, we cannot use the above CUSUM algorithm. If the state space is high-dimensional (or even moderate-dimensional), then tracking changes in the state-space model becomes intractable. As a result, we use the nonparametric CUSUM algorithm \cite{basseville1993detection} applied to the reward process $\{R_k\}$. The stopping rule remains the same, but we computer the statistic $W_n$ using (for example) 
$$
W_n = \max\big\{0, W_{n-1} + R_n - \mu_0 - \eta \sigma_0\big\}.
$$
The algorithm works as follows. The parameters $\mu_0, \sigma_0$ are the (estimated) mean and standard deviation of $R_n$ before the change, and $\eta$ is a control parameter. Before the change, $R_n$ and $\mu_0$ cancel each other giving the reflected random walk $W_n$ a negative drift. After the change, if the average reward increases more than  $\eta \sigma_0$, then the drift becomes positive and can be detected using a large positive threshold $A$. Thus, $\eta$ controls the amount of change in the average reward that we would like to tolerate before sounding an alarm. We note that the notion of Best QCD policy is well-defined even when we use the nonparametric CUSUM algorithm: it is the policy that leads to the fastest delay when using the algorithm. 

\section{$\epsilon$-Optimal Policies and Exploiting Reward-Detection Trade-off}\label{change_detection}
Based on the result in Theorem~\ref{thm:epsilonopt} on the $\epsilon$-optimality of QCD-based policies, we can argue that it is enough to restrict our search to this class of policies. In addition, the discussion in the previous section suggests that the reward-detection trade-off should be exploited and the Best QCD policy should be used to achieve better performance. In \cite{Banerjee:2017} it was shown that in the model-based setting, this exploitation is possible and leads to better rewards. It is not clear \textit{a priori} that this trade-off can be used even when the model parameters are not known. In addition to the fact that using the Best QCD policy is not optimal for rewards, we also learn the best policy locally using $Q$-learning.

We show in this paper that exploitation is possible even in the model-free or RL setting. To demonstrate this, we compare two basic algorithms, one in which the Best QCD policy is used, and another, in which it is not used. 



\subsection{Single-Threshold Change Detection: A Policy without using Best QCD Policy}
In this section, we propose an end-to-end algorithm for RL with model changes. We call the algorithm the Single-Threshold Adaptive $Q$-Learning (STAQL) algorithm. 
In STAQL, we first learn the optimal policy for $\mathbf{M}_0$ using our $Q$-learning algorithm discussed in Algorithm 1. We initialize the $Q$-matrix using numbers that can help with achieving faster convergence and more rewards. In the next section, we discuss how to smartly initialize an inventory control system. The system starts at time $0$. Let $\tau$ be the time at which the $Q$-learning converges and learns the optimal policy for model $\mathbf{M}_0$. We can learn the time $\tau$ through simulations and experience with the system. From time $\tau$ to another time $\delta$ we learn the baseline reward statistics and estimate
$$
\mu_0= mean(R[\tau:\delta]), \quad \quad \sigma_0=sd(R[\tau:\delta]).
$$
Here $R[\tau:\delta]$ denotes the vector of rewards collected from time $\tau$ to $\delta$. 
Starting time $\delta$, we apply the nonparametric CUSUM algorithm to detect the model change. Here, we are assuming that $\gamma$, the change point, satisfies $\gamma \gg \delta$. If the average reward is expected to change from low to high, we use the statistical update
\begin{equation}\label{cdhl}
W_n = \max\big\{0, W_{n-1} + R_n - \mu_0 - \eta \sigma_0\big\}.
\end{equation}
or if the average reward is expected to change from high to low, we use the following instead,
\begin{equation}\label{cdlh}
W_n = \min\big\{0, W_{n-1} + R_n - \mu_0 + \eta \sigma_0\big\}.
\end{equation}
If the average reward can change in any direction, we can use both statistics in parallel. The change is declared at
$$
\hat{\gamma} = \min\{n \geq \delta: W_n > A\}. 
$$
After the change is detected at $\hat{\gamma}$, we reinitialize the $Q$-matrix to smart values and start the $Q$-learning again (Algorithm 1) to learn the optimal policy for model $\mathbf{M}_1$. The STAQL algorithm is written in algorithmic form in Algorithm 2 and can also be represented using the following equation:
\begin{equation}
\Pi_{\text{STAQL}}=( \underbrace{\tilde{\pi}, ...,\hat{\pi}_0, \hat{\pi}_0, ...}_{ |w|\leq A, \hat{\gamma}-1},   
 \underbrace{\tilde{\pi}, ..., \hat{\pi}_1,\hat{\pi}_1,...}_{|w|> A, \hat{\gamma} \text{ onward} } ),
\end{equation}
where $\tilde{\pi}$ is the Markov map generated from the initial $Q$-table, $\hat{\pi}_0$ (respectively, $\hat{\pi}_1$) is the optimal policy for model $\mathbf{M}_0$ (respectively, $\mathbf{M}_1$) learned using $Q$-learning. 


\begin{algorithm}
    \SetKwInOut{KwIn}{Presets}
    \SetKwInOut{KwOut}{Result}
    \KwIn{threshold A, and stabilizer $\eta $}
    \KwOut{a detected change point $\hat{\gamma}$, discounted reward }
    initialization $S_0 =0, w=0$\\
    found=False \\
    set a smart initial Q-table according to the demand \\  
    \For{t =  1 : time-horizon}{
    do one step Q-learning, updating the Q-table according to the transition kernel\\
    document each step reward $R_t$ \\
    \If {$t==\delta$}   {
        take the single step reward $R[\tau:\delta] $ as the bench mark, and compute the mean and standard deviation $\mu_0= mean(R[\tau:\delta])$, $\sigma_0=sd(R[\tau:\delta])$ \\
    }  
    \If{$ t> \delta $ \& not found}  {
    compute the change detector $w$ \\
    $ w\leftarrow \max (0, w+R_t-\mu_0 - \eta sd_0)$ (if average reward changes from low to high) \\
    or $  w\leftarrow \min (0, w+R_t-\mu_0 + \eta sd_0) $ (if average reward changes from high to low)\\
    \If{$|w|>A$} {
    found=True\\   \tcc{change is detected!} 
    document the detected change $\hat{\gamma}\leftarrow t $\\
    reset Q-table according to the next stage demand \\
    reset learning parameters $\epsilon$, $\alpha$\\
                     } 
        } 
    } 
     \KwRet{ detected change time $\hat{\gamma}$, discounted reward }
    \caption{Single-Threshold Adaptive $Q$-Learning (STAQL) algorithm}
\end{algorithm}


\subsection{Two-Threshold Change Detection: Policy Exploiting Reward-Detection Trade-Off}
In this section, we propose another end-to-end algorithm for RL with model changes. We call the algorithm the Two-Threshold Adaptive $Q$-Learning (TTAQL) algorithm. The TTAQL exploits the reward-detection trade-off and uses the Best QCD policy to optimize the overall reward.

\begin{algorithm}[H]
    \SetKwInOut{KwIn}{Presets}
    \SetKwInOut{KwOut}{Result}
    \KwIn{quick change detection policy $\pi_{qcd}$, threshold B,  threshold $\tilde{A}$ and stabilizer $\eta $}
    \KwOut{a detected change point $\hat{\gamma}$, discounted reward }
    initialization $S_0 =0, w=0$\\
    found=False \\
    suspect=False \\
    set a smart initial Q-table according to the demand \\  
    \For{ t =  1 : time-horizon}{
    \eIf{not suspect} {
    do one step Q-learning, updating the Q-table according to the transition kernel\\}
    {
        apply $\pi_{qcd}$ policy \\
        update the state $S_t$ according to the transition kernel\\
         \tcc{do not update the Q-table while applying QCD policy} 
    }
    document each step reward $R_t$ \\
    
    \If {$t==\delta$}   {
        take the single step reward $R[\tau:\delta] $ as the bench mark, and compute the mean and standard deviation $\mu_0= mean(R[\tau:\delta])$, $\sigma_0=sd(R[\tau:\delta])$ \\
    }  
    
    \If{$ t> \delta $ \& not found}  {
    compute the change detector $w$, using Eq (\ref{cdhl}) or Eq (\ref{cdlh})\\
    \eIf{$|w|>B $ }{
    suspect=True
    }{suspect=False}
    
    \If{suspect} {
    \If{$|w|>\tilde{A} $} {
        found=True\\   \tcc{change is detected!} 
        suspect=False \\
        document the detected change $\hat{\gamma}\leftarrow t $\\
        reset Q-table, and learning parameters $\epsilon$, $\alpha$\\
    }
                     } 
        } 
    } 
     \KwRet{ detected change time $\hat{\gamma}$, discounted reward }
    \caption{Two-Threshold Adaptive $Q$-Learning (TTAQL) algorithm}
\end{algorithm}


Similar to STAQL, the TTAQL algorithm also uses a smart initialization followed by $Q$-learning to learn the optimal policy for $\mathbf{M}_0$. 
Again, similar to the STAQL algorithm, the TTAQL algorithm learns the baseline reward statistics using
$$
\mu_0= mean(R[\tau:\delta]), \quad \quad \sigma_0=sd(R[\tau:\delta]).
$$

Starting time $\delta$, however, a two-threshold version of the nonparametric CUSUM algorithm is applied to detect the model change. To clarify concepts, we assume that the average reward is expected to decrease. We will then use the statistic
\begin{equation}\label{cdlh2}
W_n = \min\big\{0, W_{n-1} + R_n - \mu_0 + \eta \sigma_0\big\}.
\end{equation}
The change is declared at
$$
\hat{\gamma} = \min\{n \geq \delta: W_n > A\}. 
$$
However, the algorithm uses another threshold $B < A$ to choose which policy to use at any time after time $\delta$. Specifically, if $\hat{\pi}_0$ denotes the Markov map for model $\mathbf{M}_0$ learned using $Q$-learning and $\pi_{qcd}$ denotes the Markov map for the Best QCD policy, then we use the following strategy:
\begin{equation}
\begin{split}
    \text{If }&\quad 0 \leq |W_n| \leq B, \quad \text{use map } \hat{\pi}_0 \text{ at time } n+1\\
    \text{If }&\quad B < |W_n| < A, \quad \text{use map } \pi_{qcd} \text{ at time } n+1.
    \end{split}
\end{equation}
After the change is detected at $\hat{\gamma}$, we reinitialize the $Q$-matrix to smart values and start the $Q$-learning again (Algorithm 1) to learn the optimal policy for model $\mathbf{M}_1$. The TTAQL algorithm is written in algorithm form in Algorithm 3, can also be represented using the following equation:
\begin{equation}
\Pi_{\text{TTAQL}}=(\underbrace{\tilde{\pi},...,\hat{\pi}_0,}_{|w|<B} \; \underbrace{\pi_{qcd},...\pi_{qcd}}_{B<|w|<\tilde{A}}, \; \underbrace{\hat{\pi}_0, \hat{\pi}_0,...,}_{|w|<B} 
\underbrace{\pi_{qcd},...\pi_{qcd},}_{B<|w|<\tilde{A}, \text{ before } \hat{\gamma}-1} \;
\underbrace{\tilde{\pi}, ..., \hat{\pi}_1,\hat{\pi}_1,...}_{|w|>\tilde{A}, \; \hat{\gamma} \text{  onward} }  ).
\end{equation}

\section{Simulation Results: Application to an Inventory Control Problem}\label{inv}
In this section, we apply the STAQL and TTAQL algorithms to an inventory control problem and show that the TTAQL algorithm can outperform the STAQL algorithm. We also show that the Best QCD policy for this problem is universal: there exists an interpretable policy that can detect the change fastest for any realization of the inventory control problem. We also discuss the convergence rate for smart initializations. 

\subsection{Inventory control problem}
Consider the inventory control problem with inventory level or state $S_t$, $S_t \in \{0,1,...N\}$, and $N$ is the maximum inventory size of the warehouse. Let action $A_t$ be the number of new orders in the morning of day $t$, $A_t \in \{0,1,...N\}$. During the day, customers come with a stochastic demand $D_t$, where $D_t$ is an independent and identically distributed sequence of Poisson random variables with some rate $\lambda$: 
\begin{equation}
 S_{t+1}=   max (min(S_t+A_t, N)-D_t, 0).
\end{equation}
The reward or income on the day $t$ is 
\begin{equation}
    R_{t}= -k \vmathbb{I}(A_t>0) - c(min(A_t, N-S_t)) -hS_{t+1}+p(min(S_t+A_t, N)-S_{t+1})-\text{rent}. 
\end{equation}
The income on the day $t$ is determined as follows: there is a fixed entry cost $k$ of ordering nonzero items and each item must be purchased at a fixed price $c$, so the cost associated with purchasing $A_t$ items is $k \vmathbb{I}(A_t>0) + cA_t$. In addition, there is a cost of $h$ for holding an unsold item. If there are $x$ leftovers at the end of the day $t$, the manager will pay $hx$ for holding the items the next morning.  Finally, upon selling $z$ items the manager receives a payment of $p z$. To make the warehouse running, we must have $p > h$, otherwise, there is no incentive to order new items.

We use the following set of parameters for the simulations. The parameters we use here are: warehouse capacity $N=5$, discounted factor $\beta=0.9999$, change point $\gamma=1000$, $time horizon =5000$, stabilizer $\eta=0.92$, $A=6 sd_0$, $B=3.35 sd_0$, $\tilde{A}=6.67 sd_0$, fixed cost $k=0.5$, per unit cost $c=3$, holding cost $h=2$, profit per sale $p=8$, and rent=4.8, initial learning rate $\alpha_0=0.2$, cut-off learning rate $\alpha_c=0.05$, initial exploration rate $\epsilon_0=0.2$, cut-off exploration rate $\epsilon_c=0.05$, step decent $\Delta=0.001$. We use the $R_t, \forall t\in [500,600]$ as the bench mark, where $\mu_0= mean(R[500:600])$, $sd_0 = sd(R[500:600])$. All results are averaged over 10,000 iterations.




\subsection{Using Smart Initialization in $Q$-Learning}\label{sec:smartINIT}
It is well known that the optimal Markov map in the inventory control problem is linear and nonincreasing \cite{Bertsekas:2012}. 
In Fig.~\ref{opt_initial_high}, we show that if we initialize the $Q$-tables with values that correspond to a monotonically decreasing policy that it leads to faster convergence (left figure) and better overall rewards (right figure). In the figure, $Q$-random corresponds to a randomly initialized $Q$-table, and $Q$-pyramid corresponds to a policy that is unimodal, increasing first, and then decreasing after the mode. We see this pattern as long as the demand is high. If the demand is low, we have observed that initializing with $Q$-random leads to the best overall reward.

\begin{figure}[htbp]
\begin{center}
\includegraphics[width=0.43\textwidth]{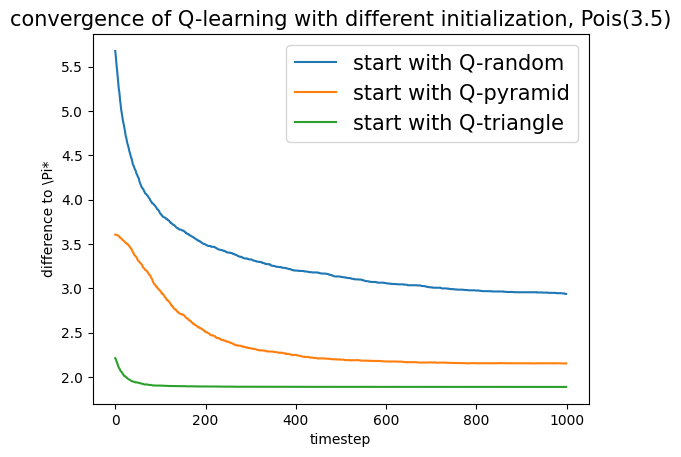}
\includegraphics[width=0.4\textwidth]{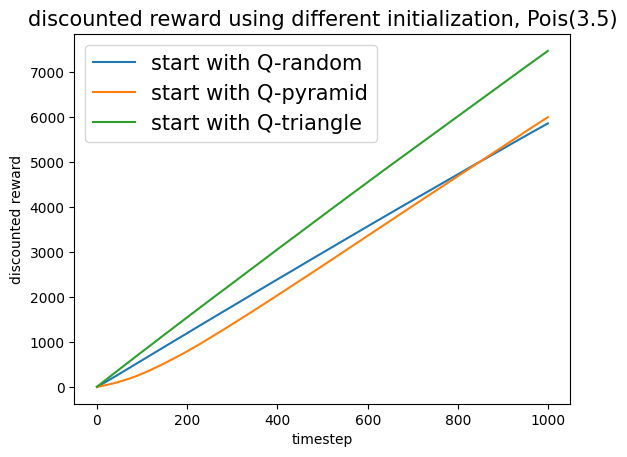}
\caption{Q-learning with different initialization.}
\label{opt_initial_high}
\end{center}
\end{figure}

\subsection{A Universal Change Detection Policy}\label{sec:univQCD}
In general, for every realization of the problem, one may have to search for the Best QCD policy through simulations. In the inventory control problem, however, we show that there is a universal Best QCD policy that is Best QCD policy for every realization of the inventory problem. This policy corresponds to the one that keeps the inventory full at all times. This is intuitive since if the demand is low, it is optimal to keep the inventory low. If there is a sudden increase in demand, items may appear out of stock and a user may never place an order. The system will fail to detect a sudden increase in demand from low to high. However, if we keep the inventory full at all times, we can always capture the fluctuations in demand.

Table~\ref{QCD_HL} compares the detection performances of the Best QCD policy and the learned optimal policy in the situation where the demand for $\textbf{M}_0$ is high and then it switches to a lower demand after the change point $\gamma=1000$. 
We note that in the table, the learned optimal policy $\hat{\pi}_0$ changes with the choice of demand rates. In Table~\ref{QCD_LH}, we show similar results when the demand of $\textbf{M}_0$ is low and the demand for $\textbf{M}_1$ is high. 

\begin{table}[htbp]
\caption{Best QCD policy vs $\hat{\pi}_0$: demand from high to low}
\begin{center}
\begin{tabular}{|c c c c c c |} 
 \hline
high to low  & $\eta$ & Best QCD policy delay & FA & $\hat{\pi}_0$  delay & FA   \\ [0.5ex] 
 \hline
$\lambda_0$=4,  $\lambda_1$=1.8 & 0.92 & 96  & 0.009 & 228 &0.009\\ 
 \hline
& 0.7 & 26  & 0.008 & 53 &0.008\\ 
 \hline
$\lambda_0$=3, $\lambda_1$=1  & 0.9 & 48.9  & 0.0091 & 165 & 0.0096\\ 
 \hline
& 0.7 & 17.5  & 0.0089 & 22 & 0.0094\\ 
 \hline
$\lambda_0$=3.5, $\lambda_1$=2.5  & 0.2 & 89  & 0.0069 & 170 & 0.0073\\ 
 \hline
& 0.1 & 109  & 0.0084 & 160 & 0.0101\\ 
 \hline
\end{tabular}
\end{center}
\label{QCD_HL}
\end{table}

\begin{table}
\caption{Best QCD policy vs $\hat{\pi}_0$: demand from low to high}
\begin{center}
\begin{tabular}{|c c c c c c |} 
 \hline
low to high  & $\eta$ & Best QCD policy delay & FA & $\hat{\pi}_0$  delay & FA   \\ [0.5ex] 
 \hline
$\lambda_0$=2,$\lambda_1$=4 & 0.3 & 19  & 0.0066 & 175 & 0.0101\\ 
 \hline
& 0.1 & 20  & 0.0088 & 100 & 0.0092\\ 
 \hline
$\lambda_0$=1.5,$\lambda_1$=3.5  & 0.4 & 13  & 0.0074 & 95 & 0.0083\\ 
 \hline
& 0.3 & 12  & 0.0084 & 35 & 0.0096\\ 
 \hline
$\lambda_0$=2,$\lambda_1$=3  & 0.2 & 63  & 0.0074 & 344 & 0.0092\\ 
 \hline
& 0.05 & 50  & 0.01 & 225 & 0.01\\ 
 \hline
\end{tabular}
\end{center}
\label{QCD_LH}
\end{table}

\subsection{Comparison of STAQL and TTAQL Policies}\label{select_w}
In the previous section, we showed that the Best QCD policy can detect changes faster. However, it can also cause a loss of immediate rewards. In Fig.~\ref{N5} we show that the TTAQL algorithm can also achieve a better overall reward. In the figure, oracle policy refers to the oracle policy discussed in Definition 1 except the policies are learned using $Q$-learning. The ignore policy simply ignores the change and incurs heavy losses due to a high holding cost. The table in the figure shows the expected discounted reward at the end of the horizon with Rwd(mdp1) as the reward collected beginning at the change point. 
In the figure on the right in Fig.~\ref{N5}, we 
plot the cumulative reward beginning at the change point, averaging over $10000$ realizations. The figure shows that the TTAQL performs better almost at each point in the entire horizon. To guarantee a fair evaluation of the two aforementioned methods, we maintain a false alarm rate of roughly 1\%. When computing the average delay and average reward, we exclude only the false alarm instances. Additionally, if there are any cases where the agent fails to detect a change, we assume that the agent detects the change at the last possible moment.



\begin{figure}[ht]
    \centering
    \begin{minipage}{0.45\linewidth}
        \centering
        \begin{tabular}{|c c c c c |} 
         \hline
          & TTAQL & STAQL & Ignore & Oracle  \\ [0.5ex] 
         \hline
         Rwd(mdp1) & 264 & 185 & -3210 & 424     \\ 
         \hline
        Rwd(total) & 8376 & 8310 &  4928  &  8601  \\  
         \hline
        avg-delay  & 145 & 227 & $\infty$ &  0 \\ 
         \hline
        true-detect\%  & 97.78 & 96.78 & 0 &  1 \\ 
         \hline
        miss\%  & 1.16 & 2.09 &  & \\ 
         \hline
        F-A \% & 1.06 & 1.13 &  &  \\ 
         \hline
        \end{tabular}
    \end{minipage}
    \hspace{0.05\linewidth}
    \begin{minipage}{0.45\linewidth}
        \centering
        \includegraphics[width=\linewidth]{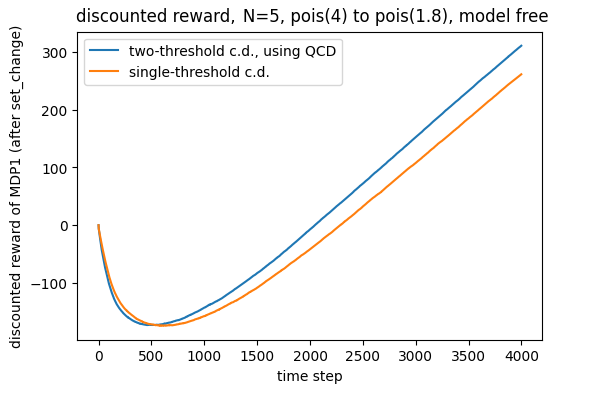}
    \end{minipage}
    \caption{Discounted reward and delay using different change detection policies, $\lambda_0=4, \lambda_1=1.8$ N=5}
    \label{N5}
\end{figure}

In Fig.~\ref{N7}, we show the results for the inventory control problem with the maximum warehouse capacity $N=7$. Most of the parameters are the same to the $N=5$ case, except for $\lambda_0 =6$, $\lambda_1 =2.5$, $\eta=1.2$, $B=4 sd_0$, $A=8 sd_0$, and $\tilde{A}=6.9 sd_0$.



\begin{figure}[ht]
    \centering
    \begin{minipage}{0.45\linewidth}
        \centering
        \begin{tabular}{|c c c c c |} 
         \hline
          & TTAQL & STAQL & Ignore & Oracle  \\ [0.5ex] 
         \hline
         Rwd(mdp1) & 603 & 521 & -3838 & 718     \\ 
         \hline
        Rwd(total) & 10645 & 10552 &  6251  &  10812  \\  
         \hline
        avg-delay  & 63 & 139 & $\infty$ &  0 \\ 
         \hline
        true-detect\%  & 98.77 & 97.78 & 0 &  1 \\ 
         \hline
        miss\%  & 0.41 & 1.26 &  & \\ 
         \hline
        F-A \% & 0.82 & 0.96 &  &  \\ 
         \hline
        \end{tabular}
    \end{minipage}
    \hspace{0.05\linewidth}
    \begin{minipage}{0.45\linewidth}
        \centering
        \includegraphics[width=\linewidth]{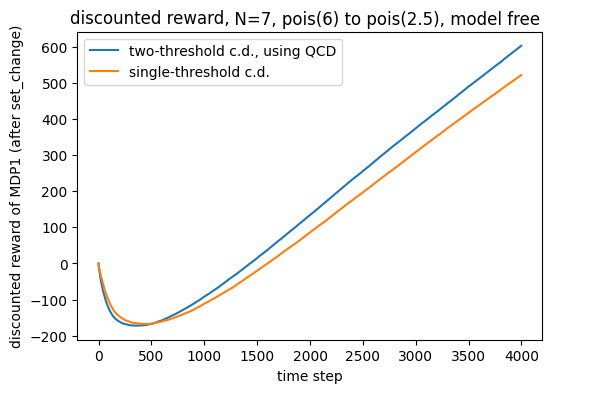}
    \end{minipage}
    \caption{Discounted reward and delay using different change detection policies, $\lambda_0=6, \lambda_1=2.5$ N=7}
    \label{N7}
\end{figure}

\section{Conclusion}
We proposed an algorithm called the Two-Threshold Adaptive $Q$-Learning (TTAQL) algorithm that can be used for RL with model changes. This algorithm exploits a fundamental trade-off between detection delay and immediate reward optimization that is present in RL in nonstationary environments. We also showed that this algorithm belongs to a class of policies called QCD-based policies. We argued in Theorem 1 that in the search for optimal policies, it is enough to restrict the search to QCD-based policies because one can achieve $\epsilon$-optimality. We also showed that in some applications like inventory control, there is a universal policy that provides the fastest detection delay. This policy can be used to exploit the reward-detection trade-off. In addition, smart initialization in the $Q$-learning algorithm can lead to faster convergence and better overall rewards. In the future, we plan to apply the TTAQL algorithm to more complex RL problems, develop better change detection algorithms for this domain, and also develop deeper theoretical insights. 



\vspace{-0.2cm}
\section{Acknowledgement}
The work of Wuxia Chen and Taposh Banerjee was supported in part by
the U.S. Army Research Lab under grant W911NF2120295.
\vspace{-0.2cm}
\footnotesize

\bibliographystyle{wsc}

\bibliography{demobib}

\begin{thebibliography}{}

\bibitem[\protect\citeauthoryear{Allamaraju, Kingravi, Axelrod, Chowdhary,
  Grande, How, Crick, and Sheng}{Allamaraju et~al.}{2014}]{Allamaraju:2014}
Allamaraju, R., H.~Kingravi, A.~Axelrod, G.~Chowdhary, R.~Grande, J.~P. How,
  C.~Crick, and W.~Sheng. 2014.
\newblock ``Human aware UAS path planning in urban environments using
  nonstationary MDPs''.
\newblock In {\em 2014 IEEE International Conference on Robotics and Automation
  (ICRA)},  1161--1167.
\newblock IEEE.

\bibitem[\protect\citeauthoryear{Banerjee, Liu, and How}{Banerjee
  et~al.}{2017}]{Banerjee:2017}
Banerjee, T., M.~Liu, and J.~P. How. 2017.
\newblock ``Quickest change detection approach to optimal control in Markov
  decision processes with model changes''.
\newblock In {\em 2017 American control conference (ACC)},  399--405.
\newblock IEEE.

\bibitem[\protect\citeauthoryear{Basseville, Nikiforov, et~al.}{Basseville
  et~al.}{1993}]{basseville1993detection}
Basseville, M., I.~V. Nikiforov et~al. 1993.
\newblock {\em Detection of abrupt changes: theory and application}, Volume
  104.
\newblock prentice Hall Englewood Cliffs.


\bibitem[\protect\citeauthoryear{Bertsekas}{Bertsekas}{2012}]{Bertsekas:2012}
Bertsekas, D. 2012.
\newblock {\em Dynamic programming and optimal control: Volume I}, Volume~1.
\newblock Athena scientific.


\bibitem[\protect\citeauthoryear{Bertsekas and Tsitsiklis}{Bertsekas and
  Tsitsiklis}{1996}]{Bertsekas:1996}
Bertsekas, D., and J.~N. Tsitsiklis. 1996.
\newblock {\em Neuro-dynamic programming}.
\newblock Athena Scientific.


\bibitem[\protect\citeauthoryear{Borkar}{Borkar}{2009}]{borkar2009stochastic}
Borkar, V.~S. 2009.
\newblock {\em Stochastic approximation: a dynamical systems viewpoint},
  Volume~48.
\newblock Springer.


\bibitem[\protect\citeauthoryear{Chades, Carwardine, Martin, Nicol, Sabbadin,
  and Buffet}{Chades et~al.}{2012}]{chades2012momdps}
Chades, I., J.~Carwardine, T.~Martin, S.~Nicol, R.~Sabbadin, and O.~Buffet.
  2012.
\newblock ``MOMDPs: a solution for modelling adaptive management problems''.
\newblock In {\em Proceedings of the AAAI Conference on Artificial
  Intelligence}, Volume~26,  267--273.

\bibitem[\protect\citeauthoryear{Chen, Tang, and Gupta}{Chen
  et~al.}{2022}]{chen2022change}
Chen, H., J.~Tang, and A.~Gupta. 2022.
\newblock ``Change Detection of Markov Kernels with Unknown Pre and Post Change
  Kernel''.
\newblock In {\em 2022 IEEE 61st Conference on Decision and Control (CDC)},
  4814--4820.
\newblock IEEE.

\bibitem[\protect\citeauthoryear{Choi, Yeung, and Zhang}{Choi
  et~al.}{2001}]{choi2001hidden}
Choi, S.~P., D.-Y. Yeung, and N.~L. Zhang. 2001.
\newblock ``Hidden-mode markov decision processes for nonstationary sequential
  decision making''.
\newblock {\em Sequence learning: paradigms, algorithms, and
  applications\/}:264--287.


\bibitem[\protect\citeauthoryear{Da~Silva, Basso, Bazzan, and Engel}{Da~Silva
  et~al.}{2006}]{da2006dealing}
Da~Silva, B.~C., E.~W. Basso, A.~L. Bazzan, and P.~M. Engel. 2006.
\newblock ``Dealing with non-stationary environments using context detection''.
\newblock In {\em Proceedings of the 23rd international conference on Machine
  learning},  217--224.

\bibitem[\protect\citeauthoryear{Dahlin, Bose, and Veeravalli}{Dahlin
  et~al.}{2022}]{dahlin2022controlling}
Dahlin, N., S.~Bose, and V.~V. Veeravalli. 2022.
\newblock ``Controlling a Markov Decision Process with an Abrupt Change in the
  Transition Kernel''.
\newblock {\em arXiv preprint arXiv:2210.04098\/}.


\bibitem[\protect\citeauthoryear{Dayan and Sejnowski}{Dayan and
  Sejnowski}{1996}]{dayan1996exploration}
Dayan, P., and T.~J. Sejnowski. 1996.
\newblock ``Exploration bonuses and dual control''.
\newblock {\em Machine Learning\/}~25:5--22.


\bibitem[\protect\citeauthoryear{Doya, Samejima, Katagiri, and Kawato}{Doya
  et~al.}{2002}]{doya2002multiple}
Doya, K., K.~Samejima, K.-i. Katagiri, and M.~Kawato. 2002.
\newblock ``Multiple model-based reinforcement learning''.
\newblock {\em Neural computation\/}~14(6):1347--1369.


\bibitem[\protect\citeauthoryear{Guan, Li, Duan, Wang, and Cheng}{Guan
  et~al.}{2018}]{Guan:2018}
Guan, Y., S.~E. Li, J.~Duan, W.~Wang, and B.~Cheng. 2018.
\newblock ``Markov probabilistic decision making of self-driving cars in
  highway with random traffic flow: a simulation study''.
\newblock {\em Journal of Intelligent and Connected Vehicles\/}~1(2):77--84.


\bibitem[\protect\citeauthoryear{Hadoux, Beynier, and Weng}{Hadoux
  et~al.}{2014}]{Hadoux:2014}
Hadoux, Beynier, and Weng. 2014.
\newblock ``Sequential decision-making under non-stationary environments via
  sequential change-point detection''.
\newblock {\em Learning over multiple contexts (LMCE)\/}.


\bibitem[\protect\citeauthoryear{Kushner and Yin}{Kushner and
  Yin}{1997}]{harold1997stochastic}
Kushner, H., and G.~Yin. 1997.
\newblock {\em Stochastic approximation and recursive algorithm and
  applications}, Volume~35.


\bibitem[\protect\citeauthoryear{Lai}{Lai}{1998}]{lai1998information}
Lai, T.~L. 1998.
\newblock ``Information bounds and quick detection of parameter changes in
  stochastic systems''.
\newblock {\em IEEE Transactions on Information theory\/}~44(7):2917--2929.


\bibitem[\protect\citeauthoryear{Meyn}{Meyn}{2022}]{Meyn:2022}
Meyn, S. 2022.
\newblock {\em Control systems and reinforcement learning}.
\newblock Cambridge University Press.


\bibitem[\protect\citeauthoryear{Sutton and Barto}{Sutton and
  Barto}{2018}]{Sutton:2018}
Sutton, R.~S., and A.~G. Barto. 2018.
\newblock {\em Reinforcement learning: An introduction}.
\newblock MIT press.


\bibitem[\protect\citeauthoryear{Tartakovsky, Nikiforov, and
  Basseville}{Tartakovsky et~al.}{2014}]{tartakovsky2014sequential}
Tartakovsky, A., I.~Nikiforov, and M.~Basseville. 2014.
\newblock {\em Sequential analysis: Hypothesis testing and changepoint
  detection}.
\newblock CRC Press.


\bibitem[\protect\citeauthoryear{Veeravalli and Banerjee}{Veeravalli and
  Banerjee}{2014}]{Veeravalli:2014}
Veeravalli, V.~V., and T.~Banerjee. 2014.
\newblock ``Quickest change detection''.
\newblock In {\em Academic press library in signal processing}, Volume~3,
  209--255. Elsevier.

\bibitem[\protect\citeauthoryear{Watkins and Dayan}{Watkins and
  Dayan}{1992}]{Watkins:1992}
Watkins, C.~J., and P.~Dayan. 1992.
\newblock ``Q-learning''.
\newblock {\em Machine learning\/}~8:279--292.


\end{thebibliography}


\end{document}